\newif\ifcamera
\cameratrue
\camerafalse %

\ifcamera
\documentclass[twoside,leqno]{article}
\else
\documentclass[11pt]{article}
\fi
\newif\ifsubmission
\submissiontrue
\submissionfalse %

\newif\ifcomment
\commenttrue

\ifsubmission
\commentfalse
\author{Anonymous Authors}
\else
\ifcamera
\author{
  Yotam Kenneth-Mordoch%
  \thanks{Weizmann Institute of Science, Rehovot, Israel (\email{yotam.kenneth@weizmann.ac.il}).}
}
\else
\author{
  Yotam Kenneth-Mordoch
  \\ Weizmann Institute of Science
  \\ \texttt{yotam.kenneth@weizmann.ac.il}
}
\fi

\fi
\ifcamera
\usepackage[letterpaper]{geometry}
\usepackage{siamproceedings}
\else
\usepackage[letterpaper, margin=1in]{geometry}
\usepackage{amsthm}
\usepackage{amsmath}
\fi
\usepackage{algorithm}
\usepackage[noend]{algpseudocode}

\usepackage{amsfonts}
\usepackage{import} 
\usepackage{xifthen}
\usepackage{mathtools}
\usepackage{arydshln}%
\usepackage{dsfont}
\usepackage[cal=esstix]{mathalfa}
\usepackage{thm-restate}

\usepackage{comment}
\usepackage[colorinlistoftodos,textsize=tiny,textwidth=2cm,color=red!25!white,obeyFinal]{todonotes}

\ifcamera\else %
\usepackage[hypertexnames=false]{hyperref}
\hypersetup{colorlinks=true,allcolors=blue}
\usepackage{cleveref}
\fi

\AddToHook{cmd/appendix/before}{\crefalias{section}{appendix}}

\newcommand{\mintcut}{\mathrm{cut}}

\newcommand{\tO}{\tilde{O}}

\DeclareMathOperator{\polylog}{polylog}
\newcommand{\indic}[1]{ \mathds{1}_{\{#1\}} } %
\providecommand{\set}[1]{{\{#1\}}}
\newcommand{\eqdef}{\coloneqq}

\ifcomment

\else
\fi

\ifcamera
\newsiamthm{question}{Question}
\newsiamthm{claim}{Claim}
\newsiamthm{conjecture}{Conjecture}
\newsiamthm{observation}{Observation}
\newsiamthm{fact}{Fact}
\newsiamremark{remark}{Remark}
\providecommand{\qedhere}{}
\else
\newtheorem{theorem}{Theorem}[section]

\newtheorem{claim}[theorem]{Claim}

\newtheorem{lemma}[theorem]{Lemma}

\newtheorem{observation}[theorem]{Observation}

\fi

\title{Deterministic Global Minimum Cut Using $\tilde{O}(n^{3/2})$ Cut Queries}
\date{}
\begin{document}

\maketitle

\ifcamera
\fancyfoot[R]{\scriptsize{Copyright \textcopyright\ 2027 by SIAM\\
Unauthorized reproduction of this article is prohibited}}
\fi
\begin{abstract}  
  Finding a global minimum cut in an unweighted graph is a fundamental task in the cut-query model.
  While there exists a randomized algorithm for this task using $\tilde O(n)$ queries, known deterministic algorithms use $\tilde O(n^{5/3})$ queries.
  We improve upon this result and give a simple deterministic algorithm that finds a global minimum cut using $\tilde O(\min\left\{ n\delta,n^2/\delta \right\})$ cut queries, which equals $\tO(n^{3/2})$ in the worst case, where $\delta$ is the minimum degree of the graph.
  The key ingredient in our approach is an efficient maximum $s,t$-flow algorithm, for the case when the minimum $s,t$-cut has value at most $\delta$.
\end{abstract}

\section{Introduction}
\label{sec:introduction}
Finding a global minimum cut is a fundamental problem in graph algorithms. 
Given an undirected unweighted graph $G=(V,E)$ on $n$ vertices, the goal is to find a non-empty set $S\subset V$ that minimizes
\[
  \mintcut_G(S) 
  \eqdef |E(S,V\setminus S)|,
\]
where $E(S,T)$ is the set of edges with one endpoint in $S$ and the other in $T$.
The value of the minimum cut is denoted by $\lambda_G$.
In this work, we focus on finding a global minimum cut in the \emph{cut-query model} \cite{RSW18}.
In this model, algorithms can access an input graph $G=(V,E)$ only via cut queries that report $\mintcut_G(S)$ for a given set $S\subseteq V$, that is, the number of edges crossing the cut between $S$ and $V\setminus S$.
The goal in this model is to minimize the \emph{query complexity}, i.e. the number of queries that the algorithm makes to the oracle.

The randomized query complexity of finding a global minimum cut is fairly well understood.
In the first work to consider the model, it was shown that a minimum cut of an unweighted graph can be found using $\tO(n)$ randomized queries \cite{RSW18}, where $\tO(\cdot)$ hides polylogarithmic factors in $n$.
This was later improved along several directions.
First, an algorithm for finding a minimum cut in a weighted graph using the same query complexity was provided in \cite{MN20}.
Subsequently, it was shown that $O(n)$ randomized cut queries suffice in the case of unweighted graphs \cite{AEGLMN22}, shaving the polylogarithmic factors.
Finally, a randomized algorithm for finding a global minimum cut using only $2$ rounds of queries, albeit with increasing the query complexity to $\tO(n^{4/3})$ was given in \cite{KK25}.
Moreover, the randomized query complexity can be matched by pseudodeterministic algorithms, which output a fixed canonical minimum cut with high probability~\cite{AV26,Ken26}.

In comparison, the deterministic complexity of finding a global minimum cut is less well understood.
The only deterministic algorithm known, which uses $\tilde O(n^{5/3})$ cut queries was given in \cite{ASW25}.
A main tool in this algorithm is a version of Dinitz's max-flow algorithm~\cite{Dinitz06}, for flows of value at most $O(n)$, using $\tO(n^{5/3})$ cut queries. 
It then leverages this primitive to apply the framework of \cite{LP20} for finding a minimum cut using $\polylog n$ max flows.
This framework relies on complex machinery, including terminal sparsification, and expander decomposition.
Furthermore, applying it in the cut-query setting requires great care as the max-flow primitive of \cite{ASW25} is limited to flows of value $O(n)$.

Our main result is a simple deterministic algorithm that finds a global minimum cut using $\tO(\min\left\{ n\delta,n^2/\delta \right\})$ cut queries, where $\delta$ is the minimum degree of $G$.
The worst-case complexity of our algorithm is $\tO(n^{3/2})$, which improves upon the query complexity of \cite{ASW25}.
Furthermore, it achieves even better query complexity when $\delta$ is small or large.
While it was known before that one can achieve better query complexity when $\delta$ is small, by packing $\delta$ edge disjoint trees, we give the first indication that finding a global minimum cut becomes easier when the minimum degree is large.
In particular, when $\delta = \Omega(n)$ our algorithm matches the randomized complexity of finding a global minimum cut (up to polylogarithmic factors).
We are now ready to state our main result.

\begin{theorem}
\label{thm:intro-main}
There exists a deterministic algorithm that, given cut-query access to a simple graph $G$ on $n$ vertices with minimum degree $\delta$, finds a global minimum cut of $G$ using $\tilde O(\min\left\{ n\delta, n^2/\delta \right\})$ cut queries.
\end{theorem}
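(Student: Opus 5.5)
The plan is to prove the two bounds separately and run whichever is smaller; $\delta$ itself costs only $n$ singleton queries $\mintcut_G(\{v\})$. Both bounds rest on the primitive announced in the abstract. That primitive is a deterministic procedure which, given $s,t$, either certifies that the minimum $s,t$-cut exceeds $\delta$, or returns a maximum $s,t$-flow together with a minimum $s,t$-cut when its value is at most $\delta$. I would aim for this procedure to use $\tO(n\cdot k)$ queries when the flow value is $k\le\delta$, via augmenting paths. Each augmenting path in the residual graph is found by a BFS/DFS in which the out-neighbourhood of the current frontier is located by binary search with cut queries. This costs $\tO(1)$ queries per discovered vertex, plus a correction for the at most $k\le\delta$ edges already carrying flow, which are known explicitly. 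That gives $\tO(n)$ per augmentation and $\tO(n\delta)$ for a flow of value at most $\delta$.

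\emph{Bound $\tO(n\delta)$.} Since $\lambda_G\le\delta$, I would mimic a Hao--Orlin/Nagamochi--Ibaraki style argument so that only $\tO(1)$ max-flow computations are charged at full price. A cleaner route is to build a sparse certificate. Deterministically compute a forest decomposition $F_1,\dots,F_{\delta+1}$ (Nagamochi--Ibaraki scan-first search) of total size $O(n\delta)$, preserving all cuts of value at most $\delta$. Each scan-first forest can be found with $\tO(n)$ queries, because each vertex discovery is a binary search for a neighbour of the current vertex outside the already-chosen edges, and there are at most $\delta$ previously chosen edges per vertex to subtract. This gives $\tO(n\delta)$ queries in total, after which the minimum cut of the explicit certificate is computed offline with no further queries.

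\emph{Bound $\tO(n^2/\delta)$.} This is the main obstacle. The key structural fact is that in a simple graph any cut of value at most $\delta$ has both sides of size at least about $\delta$: a side $S$ with $|S|<\delta$ has each vertex sending at least $\delta-|S|+1$ edges out, so $\mintcut(S)\ge |S|(\delta-|S|+1)>\delta$ unless $|S|=1$ or $|S|\ge\delta$. Trivial (singleton) cuts are handled by the degree queries. For nontrivial minimum cuts, I would deterministically choose $O(n/\delta\cdot\log n)$ vertices hitting every set of size at least $\delta$. Any fixed vertex $s$ of the set then has some chosen $t$ on the other side of the minimum cut, so one fixed source $s$ and $O((n/\delta)\log n)$ sinks suffice. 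The hitting set cannot be random, so I would use a greedy or explicit construction. For example, taking every $(\delta/2)$-th vertex of any ordering fails for arbitrary sets, so instead I would exploit the fact that sides of small cuts are dense: pick a maximal set of vertices with pairwise disjoint closed neighbourhoods, which has size $O(n/\delta)$ and each of whose members has $\ge\delta$ neighbours. I would then argue that each side of a nontrivial min cut contains such a vertex or a neighbour of one, and take the hitting set to be these centres plus one neighbour each.

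Each max-flow call then costs $\tO(n\delta)$ if done naively, which is too much, since it gives $n^2$ rather than $n^2/\delta$. So the flow primitive must instead cost $\tO(n)$ per $s,t$ pair amortised. The route I would try first is to compute a single sparse certificate and run all flows on it offline. But the certificate itself costs $n\delta$, so this part needs the genuinely new idea. I expect to replace the $\delta$ augmenting-path searches with blocking-flow phases whose residual BFS queries are shared across all $t$, or to contract high-degree dense clusters, reducing to a graph with $O(n/\delta)$ super-vertices on which queries are cheap. Establishing this amortisation is the step I expect to be hardest.
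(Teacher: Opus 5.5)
Your $\tO(n\delta)$ branch is essentially the paper's (a Nagamochi--Ibaraki certificate with $\Theta(\delta)$ forests, solved offline), and it is fine. A minor point: a vertex can have far more than $\delta$ previously chosen forest edges, but they are explicit, so subtracting them costs nothing.

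\textbf{Gap 1: the flow primitive.} The $\tO(n^2/\delta)$ branch is not established, as you acknowledge. With $\tO(n)$ queries per augmentation, each $s,t$ call costs $\tO(n\delta)$ and the total is $\tO(n^2)$. The missing idea is that one augmentation can be done with $\tO(n/\delta)$ queries. This needs neither sharing across sinks nor contraction.
\begin{itemize}
\item Keep an explicit maximal spanning forest $T$ of $G\setminus f$. By \cite{KL98}, $T\cup E_f$ contains an augmenting path if and only if $G_f$ does, so some simple augmenting path $\pi$ on $r$ vertices is found with no queries.
\item Let $d_\pi(v)$ be the number of $G\setminus f$-neighbours of $v$ on $\pi$. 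If $r\delta>8(n+q(f))$, then $\sum_v d_\pi(v)\ge r\delta-2q(f)\ge r\delta/2$, so some $u$ has $d_\pi(u)\ge r\delta/(2n)$.
\item Such a $u$ is found by binary search over $V$, because $\sum_{v\in V_i}d_\pi(v)$ costs $O(1)$ queries once the degrees are known. Rerouting $\pi$ through $u$, between its first and last neighbour on $\pi$, shrinks the path to at most $r(1-\delta/(8n))$ edges.
\item Hence $O((n/\delta)\log n)$ shortcuts give $|\pi'|\le 8(n+q(f))/\delta$.
\item There are at most $\delta$ augmentations, so $q(f)+n$ grows by at most a factor $(1+8/\delta)^\delta=O(1)$, and $q(f)=O(n)$ throughout.
\item Repairing $T$ after deleting $\pi'$ costs $\tO(|\pi'|)$ queries, since only $O(|\pi'|)$ new components arise. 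Summed over all augmentations this is $\tO(n)$.
\end{itemize}
So one bounded max-flow costs $\tO(n)$ outright.

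\textbf{Gap 2: the choice of sinks.} Your fallback takes the centres $C$ of a maximal family of pairwise disjoint closed neighbourhoods, plus one neighbour of each. This does not work. Take two copies of $K_m$ joined by a single edge $ab$, with $m\ge 3$, so $\lambda_G=1<\delta=m-1$. Then $C=\{a\}$ is maximal. If the chosen neighbour of $a$ lies in $a$'s own clique, the other side of the unique minimum cut contains no chosen vertex. The underlying problem is that such a $C$ only dominates within distance $2$.

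What you need is a genuine dominating set $R$. If a side $S$ avoided $R$, each vertex of $S$ would have at least $\max\{1,\delta-|S|+1\}$ edges leaving $S$, so $\mintcut_G(S)\ge\delta$. Thus when $\lambda_G<\delta$ every minimum cut has $R$ on both sides, and when $\lambda_G=\delta$ the minimum-degree singleton is optimal. A dominating set of size $O((n/\delta)\log n)$ exists by greedy set cover. However, building one deterministically with $\tO(n)$ cut queries is a nontrivial result, which the paper imports from \cite{ASW25}. Your proposal neither cites nor implements it.
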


\subsection{Future Work}
\paragraph{Faster Global Minimum Cut.}
The main open question from this work is what is the deterministic query complexity of finding a global minimum cut.
In particular, algorithms using $O(n)$ cut queries and matching existing randomized algorithms \cite{RSW18,MN20,AEGLMN22} or a lower bound of $\omega(n)$ would be of great interest.

\paragraph{Deterministic Maximum Flow.}
Another natural question in the deterministic cut-query model is whether one can solve minimum $s,t$-cut using $o(n^{5/3})$ cut queries.
Maximum flow is a central primitive both in \cite{ASW25} and in this work (although in a restricted setting), and it is natural to ask whether one can improve the query complexity of this primitive.
Recent randomized algorithms for this problem use $\tO(n^{3/2})$ cut queries \cite{JNS26,KK26}, and it is open whether this can be achieved deterministically.

\section{Global Minimum Cut Algorithm}
\label{sec:tech-overview}
Our main technical contribution is an efficient max-flow algorithm that uses $\tO(n)$ cut queries when the minimum $s,t$ cut has value at most $\delta$.
This is based on the fact that in this case, there exists a maximum flow between $s$ and $t$ that is supported on at most $O(n)$ edges.
This was previously observed in \cite{GY95,HKR97}, however it was not known how to exploit this property to obtain an efficient algorithm in the cut-query model.
It is also spiritually related to previous work bounding the number of edges in a maximum $s,t$-flow in a graph \cite{KL98,RSW18}.
We manage to leverage this property to give an efficient max-flow algorithm for this case.

To gain some intuition, let $s,t$ be vertices separated by a global minimum cut and let $f$ be a maximum $s,t$-flow obtained by applying the Edmonds-Karp algorithm \cite{EK72}, which augments the flow along shortest paths in the residual graph.
Recall that the radius of a connected graph with minimum degree $\delta$ is at most $O(n/\delta)$ \cite{EPPT89}.
Morally, the same short radius property holds in the residual graph as long as the flow does not remove too many edges;
hence the length of each augmenting path is $O(n/\delta)$, and the total number of edges carrying flow is at most $O(n)$.
Making this precise requires more care and is given later in \Cref{sec:bounded-dinitz}.
Unfortunately, this intuition does not immediately yield an efficient algorithm, since we do not have efficient decremental BFS algorithms.
Instead, our algorithm augments along short ``enough'' augmenting paths.
Specifically, we show that if one is given any augmenting path, then it is possible to find a short augmenting path of length at most $O(n/\delta)$ using $\tO(n/\delta)$ cut queries.
This allows us to implement the following bounded max-flow algorithm, whose proof is provided in \Cref{sec:bounded-dinitz}.
\begin{theorem}[Bounded Max-Flow]
\label{thm:bounded-dinitz}
There exists a deterministic algorithm that, given cut-query access to a simple graph $G$ on $n$ vertices and
vertices $s,t\in V$, computes an exact $s,t$ max-flow of value at most $\delta$, or certifies that the $s,t$ max-flow value is larger than $\delta$, using $\tO(n)$ queries.
If the algorithm finds a maximum flow, it also returns a minimum $s,t$-cut.
\end{theorem}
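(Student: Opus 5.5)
We plan to run a Ford--Fulkerson-style augmentation for at most $\delta+1$ rounds. We maintain an integral flow $f$ of value $k\le\delta$ and, crucially, keep its support $F$ (the set of edges carrying flow) explicitly in memory. We keep the invariant that $|F|=O(kn/\delta)$, so $|F|=O(n)$ throughout.

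The residual graph $G_f$ agrees with $G$ except on $F$. Hence a residual cut query can be simulated by one cut query to $G$ plus a correction computed locally from $F$. With this simulation, the standard cut-query primitives become available on $G_f$: counting edges between disjoint sets (via three cut queries and inclusion--exclusion), binary search for a neighbor of a set inside a target set in $O(\log n)$ queries, and more generally finding an edge leaving a set.

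\textbf{Step 1: finding some augmenting path.} We grow the set $R$ of vertices reachable from $s$ in $G_f$. Each new vertex costs $O(\log n)$ queries, by binary-searching for a residual edge from $R$ to $V\setminus R$. This takes $\tO(n)$ queries per round. That is too expensive over $\delta$ rounds, so we only want this step to certify termination: when $t\notin R$, the set $R$ is a minimum $s,t$-cut.

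To pay for the rounds where augmentation succeeds, we instead use a layered search. We build BFS layers $L_0=\{s\},L_1,\dots$ in $G_f$ and exploit minimum degree. Every vertex of $G$ has degree at least $\delta$, and at most $2k\le 2\delta$ of its incident edges are affected by the flow; in fact only edges in $F$ change. So any vertex $v$ with few incident $F$-edges keeps residual degree at least $\delta-\deg_F(v)$.

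The classical radius argument \cite{EPPT89} then applies. The sets $L_{i-1}\cup L_i\cup L_{i+1}$ must contain about $\delta$ vertices for every third layer, except near vertices heavily loaded by $F$. Since $\sum_v\deg_F(v)=O(n)$, only $O(n/\delta)$ vertices can have $\deg_F(v)\ge\delta/2$. Therefore, after $O(n/\delta)$ layers either $t$ is reached or the search stops, and the layers have total size that we pay for as we go.

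\textbf{Step 2: shortening a path.} The paper states the real tool: given \emph{any} augmenting path $P$, we can find one of length $O(n/\delta)$ using $\tO(n/\delta)$ queries. The plan to prove this is to walk along $P$ and repeatedly shortcut. Take vertices $u$ on $P$. A vertex with large residual degree has a neighborhood of size $\ge\delta/2$, so among any $O(n/\delta)$ such vertices with pairwise disjoint neighborhoods, two must share a neighbor or be adjacent. Concretely, we pick every third vertex of the current path that is not heavy in $F$. If there are more than $\Theta(n/\delta)$ of them, then by pigeonhole two of them, $u_i,u_j$ with $j\ge i+3$, have a common residual neighbor. We find such a pair by binary search with cut queries, performing a disjointness test between $N(u_i)$ and the union of later neighborhoods, computed via cut counts. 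We then splice the path. Each splice costs $\tO(1)$ queries and reduces the length, and we stop once the length is $O(n/\delta)$.

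The main obstacle is making this splicing efficient (total $\tO(n/\delta)$ rather than $\tO(|P|)$) and handling heavy vertices. For the latter I would charge heavy vertices separately, since there are only $O(n/\delta)$ of them.

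\textbf{Step 3: bookkeeping.} Augmenting along a path of length $O(n/\delta)$ adds $O(n/\delta)$ edges to $F$, which preserves the invariant. After at most $\delta$ augmentations, each costing $\tO(n/\delta)$ queries for shortening, the total is $\tO(n)$. If an augmentation of value $\delta+1$ succeeds, we certify that the max flow exceeds $\delta$. Otherwise the final failed reachability search yields a minimum cut.

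The remaining gap is that the initial "any augmenting path" must also be obtained within $\tO(n/\delta)$ amortized queries, since a plain search costs $\tO(n)$. I would handle this by reusing the reachability tree across rounds: the flow changes only on $O(n/\delta)$ edges per round, so we can repair the tree incrementally. I expect this amortization, together with the shortening lemma, to be the hardest part.
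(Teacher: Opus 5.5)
Your skeleton is right: at most $\delta+1$ augmentations, an explicitly stored flow support, residual cut queries simulated from $G$, augmenting paths of length $O(n/\delta)$ so the support stays $O(n)$, and one final $\tO(n)$ reachability search for the cut. But the two steps that carry the $\tO(n)$ bound are either left open or argued in a way that fails.

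The first gap is obtaining \emph{some} augmenting path in every round. Your layered search does not help. The radius argument bounds the number of BFS layers, not the number of vertices you must discover before reaching $t$. Each discovery costs $\Theta(\log n)$ queries, so this is still $\tO(n)$ per round and $\tO(n\delta)$ overall. Your fallback, incrementally repairing a reachability tree of the directed residual graph, is exactly the decremental-BFS problem the paper avoids. An arc entering a detached subtree at $w$ only rescues the part below $w$. The same vertex may be detached and reattached in many rounds, and you give no bound on the number of edge searches this triggers.

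The missing idea is \Cref{thm:kl98-exist-augmenting}: maintain a maximal spanning forest $T$ of the \emph{undirected} graph $G\setminus f$. Then $T\cup E_f$ is an explicitly known set of $O(n)$ arcs that contains an $s,t$ augmenting path iff $G_f$ does. So a (possibly long) augmenting path is found with zero queries. Because $T$ is undirected, it is cheap to repair. Deleting the edges of the shortened path $\pi'$ splits $T$ into $O(|\pi'|)$ new pieces. Each search for an edge leaving a new piece costs $O(\log n)$ queries in $G\setminus(f\cup\pi')$ and either merges two pieces or certifies one maximal. Cancelled flow edges are then added back for free. This costs $\tO(|\pi'|)$ per round and $\tO(n)$ overall.

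The second gap is the shortening step. Taking every third non-heavy vertex only guarantees a pair with intersecting neighbourhoods at index distance $\ge 3$. So a splice may save a single edge, and you may need $\Theta(|P|)=\Theta(n)$ splices per round. Moreover, you do not explain how to evaluate the disjointness test between $N(u_i)$ and a union of neighbourhoods with $\tO(1)$ cut queries, and it is not a linear function of edge counts between known sets.

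The paper instead defines, for \emph{every} $v\in V$, $d_\pi(v)$ as the number of $G\setminus f$-neighbours of $v$ on the current path $\pi=(p_1,\dots,p_r)$. Since $\sum_{v\in V}d_\pi(v)\ge r\delta-2q(f)$, whenever $r-1>8(n+q(f))/\delta$ some vertex $u$ has $d_\pi(u)\ge r\delta/(2n)$. Splicing through $u$ between its first and last neighbours on $\pi$ leaves length at most $r(1-\delta/(8n))$, so $O((n/\delta)\log n)$ splices suffice.

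Crucially, this quantity \emph{is} linear. For $V'\subseteq V$ and $A=V'\cap\pi$, one has $\sum_{v\in V'}d_\pi(v)=|E(V'\setminus\pi,\pi)|+|E(A,\pi\setminus A)|+2|E(A,A)|$ in $G\setminus f$. Here $2|E(A,A)|$ is the known degree sum of $A$ in $G\setminus f$ minus $\mintcut_{G\setminus f}(A)$. Hence $u$ is found by binary search over $V$ with $O(\log n)$ queries, and its extreme neighbours on $\pi$ by binary search along $\pi$.

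The $-2q(f)$ term also replaces your separate charging of heavy vertices. The bound $|\pi'|\le 8(n+q(f))/\delta$ gives $q(f_i)+n\le(1+8/\delta)(q(f_{i-1})+n)$, so the support stays $O(n)$ throughout the at most $\delta$ rounds.
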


Notice that using the above primitive, it is possible to find a global minimum cut in $\tO(n)$ cut queries if given two vertices $s,t$ separated by a global minimum cut.
To find such a pair of vertices, we use the following results of \cite{ASW25}.
\begin{lemma}[Lemma 5.3 of \cite{ASW25}]
\label{lem:dominating-global}
Let $G$ be a simple graph on $n$ vertices, and let $R$ be a dominating set of $G$.
If $\lambda_G<\delta$, then every global minimum cut has a vertex of $R$ on each side.
\end{lemma}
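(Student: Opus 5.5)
The plan is to argue by contradiction, showing that any cut with no vertex of $R$ on one side is too expensive. Fix a global minimum cut $(S, V\setminus S)$ with value $\lambda_G<\delta$. Suppose, for contradiction, that one side, say $S$, contains no vertex of $R$. Since $R$ is dominating, every vertex of $S$ then has a neighbor in $R\subseteq V\setminus S$, so every vertex of $S$ has at least one edge crossing the cut. It therefore suffices to show that $|S|$ must be large, and that a large $S$ forces too many crossing edges.

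The core step is a standard degree-counting bound for simple graphs. Let $k=|S|$. Each $v\in S$ has degree at least $\delta$. Because the graph is simple, at most $k-1$ of its neighbors lie inside $S$, so at least $\delta-(k-1)$ of its edges cross the cut. Summing over $S$ gives
\[
  \mintcut_G(S) \;\ge\; k\bigl(\delta-k+1\bigr).
\]
If $1\le k\le \delta$, this quantity is at least $\delta$: the function $k(\delta-k+1)$ is concave in $k$, and it equals $\delta$ at both endpoints $k=1$ and $k=\delta$. That contradicts $\lambda_G<\delta$. Hence $k\ge\delta+1$.

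To finish, combine this with domination. Every vertex of $S$ contributes at least one crossing edge, via its neighbor in $R$. These edges are distinct, since they are counted by their endpoint in $S$. So $\mintcut_G(S)\ge |S|\ge \delta+1>\lambda_G$, a contradiction. Therefore both sides of every global minimum cut contain a vertex of $R$. Nonemptiness of $S$ is given, since a cut has nonempty sides.

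The only delicate point is the concavity range argument, in particular making sure the bound covers every $k$ between $1$ and $\delta$, including $k=1$ and $k=\delta$. Beyond that the argument is elementary, and I expect no real obstacle.
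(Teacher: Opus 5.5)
Your proof is correct and complete. The paper gives no proof of this lemma and imports it from \cite{ASW25}; your argument is the standard one. Simplicity and the degree bound give $\mintcut_G(S)\ge |S|(\delta-|S|+1)\ge\delta$ whenever $1\le|S|\le\delta$, while $S\cap R=\emptyset$ together with domination gives $\mintcut_G(S)\ge|S|$ through distinct crossing edges, and these two facts contradict $\lambda_G<\delta$.
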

\begin{theorem}[Theorem 5.4 of \cite{ASW25}]
\label{thm:dominating-set}
There exists a deterministic algorithm that, given cut-query access to a simple graph $G$ on $n$ vertices, computes a dominating set $R\subseteq V$ of size $\tO(n/\delta)$ using $\tO(n)$ cut queries.
\end{theorem}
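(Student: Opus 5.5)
The plan is a greedy covering argument in which each greedy step is implemented with $O(\log n)$ cut queries, plus a neighbor-discovery cost that amortizes to $\tO(n)$ in total.

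\textbf{Preliminaries.} First query $\mintcut_G(\{v\})=\deg(v)$ for every $v$, using $n$ queries. This gives $\delta$ and all degrees. With degrees known, a constant number of cut queries yields the number of edges $e(S)$ inside any set $S$, since $e(S)=(\sum_{v\in S}\deg(v)-\mintcut_G(S))/2$. It also yields $|E(A,B)|$ for disjoint $A,B$, via $|E(A,B)|=e(A\cup B)-e(A)-e(B)$. Splitting overlaps into disjoint pieces, define
\[
d_U(A)\eqdef\sum_{v\in A}|N(v)\cap U|,
\]
the number of pairs $(v,u)$ with $v\in A$, $u\in U$ and $uv\in E$. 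This quantity is computable with $O(1)$ queries for any sets $A$ and $U$.

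\textbf{Greedy step.} Maintain the set $U$ of currently undominated vertices, starting with $U=V$ and $R=\emptyset$. Every $u\in U$ has at least $\delta$ neighbors, so $d_U(V)\ge |U|\delta$. Hence some vertex $v$ satisfies $|N(v)\cap U|\ge |U|\delta/n$. We find such a vertex by an averaging binary search. Split the current candidate set $A$ (initially $V$) into two halves $A_1,A_2$. Compute $d_U(A_1)$ and $d_U(A_2)$, and recurse into a half whose average $d_U(A_i)/|A_i|$ is at least that of $A$; one always exists. After $O(\log n)$ rounds we hold a single vertex $v$ with $|N(v)\cap U|\ge d_U(V)/n\ge |U|\delta/n$, having used $O(\log n)$ queries. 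Add $v$ to $R$, and remove $N[v]\cap U$ from $U$.

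\textbf{Updating $U$.} To perform the removal we must learn $N(v)\cap U$ explicitly. We do this by recursive halving of $U$: test whether $|E(\{v\},X)|>0$ for a part $X$ and recurse only into nonempty parts. This costs $O(|N(v)\cap U|\log n)$ queries. Each vertex leaves $U$ exactly once, so the total discovery cost over the whole run is $O(n\log n)$.

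\textbf{Counting and termination.} Each step shrinks $|U|$ by a factor $(1-\delta/n)$. Stop the greedy phase once $|U|\le n/\delta$ and then add all remaining vertices of $U$ to $R$. The number of greedy steps is therefore $O((n/\delta)\log n)$, so $|R|=\tO(n/\delta)$. The total query count is $n+O((n/\delta)\log^2 n)+O(n\log n)=\tO(n)$, and everything is deterministic.

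The main obstacle is the selection step. Recomputing $|N(v)\cap U|$ for every $v$ in each round would cost $n$ queries per step and $\tO(n^2/\delta)$ overall. The averaging binary search over aggregate quantities $d_U(A)$ avoids this. The delicate part is checking that $d_U(A)$ is obtainable from $O(1)$ cut queries even when $A$ and $U$ overlap, which is where the precomputed degrees are used.
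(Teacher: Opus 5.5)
Your proof is correct. Note that the paper never proves this statement: it imports it as a black box from \cite{ASW25}, so there is no in-paper argument to compare against.

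Your argument is the classical greedy domination bound implemented with cut queries. Each chosen vertex covers a $\delta/n$ fraction of the undominated set, which gives $|R| = O\bigl((n/\delta)(1+\ln\delta)\bigr)$. Each piece checks out:
\begin{itemize}
\item For overlapping sets, $d_U(A)$ splits as $|E(A\setminus U,U)| + |E(A\cap U,U\setminus A)| + 2e(A\cap U)$. Each term costs $O(1)$ queries once degrees are known.
\item The averaging descent returns a vertex with $|N(v)\cap U|\ge |U|\delta/n$ using $O(\log n)$ queries.
\item The explicit neighbor discovery costs $O(n\log n)$ in total, because each discovered vertex leaves $U$ permanently.
\end{itemize}

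Your selection step is exactly the mechanism the paper uses in Lemma~\ref{lem:shortcut-augmenting-path} to find a shortcutting vertex. There, too, an additive per-vertex count $\sum_{v\in V_i} d_\pi(v)$ is computed by splitting $V_i$ into its parts inside and outside the target set and using known degrees. The search then descends into the half with the larger average. So your argument shows the paper could have been self-contained on this point using only its own primitives.

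The only thing to add is the degenerate case $\delta=0$. There the bound is vacuous and $R=V$ suffices. The main algorithm never invokes this theorem in that case anyway, since it calls it only when $\delta>n^{1/2}$.
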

This suggests the following natural algorithm, which entirely bypasses the \cite{LP20} framework leveraged by \cite{ASW25}.
Fix a dominating set $R$ of size $\tilde O(n/\delta)$, and an arbitrary source vertex $s\in R$;
then find the minimum cut among all pairs $(s,r)$ for $r\in R\setminus\{s\}$ using the bounded max-flow primitive.
Unfortunately, the size of the dominating set $\tO(n/\delta)$ can be large when the minimum degree is small.
To handle the latter case, we use a Nagamochi-Ibaraki sparsifier \cite{NI92}.
This result is standard, but we include a proof for completeness in \Cref{sec:ni-sparsifier-proof}.
\begin{lemma}[Nagamochi-Ibaraki Sparsifier]
\label{lem:ni-certificate}
Given a simple graph $G$ on $n$ vertices and an integer $\tau\ge 1$, using $\tO(n\tau)$ cut queries one can explicitly recover a subgraph $H\subseteq G$ with $O(n\tau)$ edges, such that for every cut $S\subseteq V$,
\[
  \mintcut_H(S) 
  \ge \min\left\{ \mintcut_G(S), \tau \right\}
  .
\]
\end{lemma}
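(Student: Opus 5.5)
We build $H$ with the classical Nagamochi--Ibaraki construction: take $H=F_1\cup\dots\cup F_\tau$, where each $F_i$ is a spanning forest of $G_i \eqdef G\setminus(F_1\cup\dots\cup F_{i-1})$. Each forest has at most $n-1$ edges, so $|E(H)|\le \tau(n-1)=O(n\tau)$. The work splits into two parts: the combinatorial guarantee, and an implementation in the cut-query model.

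\emph{Cut guarantee.} Fix $S$ and suppose first that $\mintcut_{G_i}(S)>0$ for every $i\le\tau$. Then $G_i$ has an edge crossing $S$. A spanning forest of $G_i$ preserves the connected components of $G_i$, and this edge has both endpoints in the same component. Hence $F_i$ contains an edge crossing $S$. The forests are edge-disjoint, so $\mintcut_H(S)\ge\tau$.

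Otherwise, let $i$ be the first index with $\mintcut_{G_i}(S)=0$. Every edge of $E_G(S,V\setminus S)$ was then removed in some earlier forest and so lies in $H$, which gives $\mintcut_H(S)=\mintcut_G(S)$. Combining the two cases yields the claimed inequality.

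\emph{Query implementation.} We only ever query $G$, but we maintain the explicitly known edge set $F=F_1\cup\dots\cup F_{i-1}$. This lets us simulate cut queries to $G_i$ without extra cost: $\mintcut_{G_i}(A)=\mintcut_G(A)-\mintcut_F(A)$, and the second term is computed for free. We can also count edges between disjoint sets $A,B$ in $G_i$ with three such queries, using
\[
  |E_{G_i}(A,B)|=\tfrac12\bigl(\mintcut_{G_i}(A)+\mintcut_{G_i}(B)-\mintcut_{G_i}(A\cup B)\bigr).
\]

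To build the forest $F_i$ we run a Borůvka/Prim-style growth. We keep the current forest components. Given a component $C$ and the set $U$ of vertices outside it, we test whether $|E_{G_i}(C,U)|>0$. If so, we binary search over halves of $U$ to locate a neighbor $v\in U$, which takes $O(\log n)$ queries. Knowing $v$, we binary search again over $C$ to find the endpoint $u\in C$ with $uv\in E(G_i)$, for another $O(\log n)$ queries. We then add $uv$ and merge the two components. If the test returns $0$, then $C$ is a full component of $G_i$ and we retire it.

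\emph{Cost and main obstacle.} Each forest requires at most $n-1$ successful edge discoveries, each costing $O(\log n)$ queries. It also requires at most $n$ failed tests, one per retired component, each costing $O(1)$ queries. So one forest costs $O(n\log n)$ queries, and all $\tau$ forests cost $\tO(n\tau)$ in total. The main thing to check carefully is exactly this accounting: every query either discovers a new forest edge (within a binary search of $O(\log n)$ steps) or permanently retires a component. This relies on the fact that we process one component at a time, Prim-style, so no query is wasted.
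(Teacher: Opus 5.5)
Your proof is correct and follows the paper's argument. Both peel off $\tau$ edge-disjoint maximal spanning forests using $O(\log n)$-query binary searches per forest edge, and both derive the cut property from the maximality of the first forest with no edge crossing the cut. The only difference is cosmetic: the paper searches from a single vertex $v$ into the vertices outside its component that are not earlier-forest neighbors of $v$, so plain cut queries in $G$ already count only edges of $G_i$, whereas you search from whole components and simulate queries to $G_i$ by subtracting the known edges of $F$; the query accounting is the same.
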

\begin{proof}[Proof of Theorem~\ref{thm:intro-main}]
  Begin by querying all singleton cuts, to find the minimum degree $\delta$ of the graph.
  If $\delta=0$, return any isolated vertex.
  If $\delta\le n^{1/2}$, explicitly recover a $2\delta$-NI sparsifier $H$ using \Cref{lem:ni-certificate}, and find a global minimum cut on the sparsifier using some offline algorithm.
  Otherwise, construct a dominating set $R$ of size $\tilde O(n/\delta)$ using \Cref{thm:dominating-set}, and fix an arbitrary source vertex $s\in R$.
  For every $r\in R\setminus\{s\}$, run \Cref{thm:bounded-dinitz} on the pair $(s,r)$.
  Finally, return the minimum cut found among all pairs $(s,r)$, or the minimum degree if no cut of value at most $\delta$ is found.

  We now argue the correctness and query complexity of the algorithm.
  In the low degree case, the correctness is immediate since the minimum cut is at most $\delta$ and by \Cref{lem:ni-certificate} for every $S\subseteq V$, $\mintcut_H(S) \ge \min\left\{ \mintcut_G(S), 2\delta \right\}$.
  It is also clear that the query complexity is $\tO(n\delta)$, since given the sparsifier no further queries are needed to find a global minimum cut.
  In the high degree case, let $S\subseteq V$ be a global minimum cut. 
  If $\lambda_G \le \delta-1$, then by \Cref{lem:dominating-global} there exists a vertex $r^*\in R$ such that $r^*$ is on the opposite side of the cut $S$ from $s$.
  Hence, $s$ is separated from $r^*$ by a global minimum cut, and the bounded max-flow primitive will find it when called on $(s,r^*)$.
  If $\lambda_G= \delta$, then we return the minimum degree as the global minimum cut, which is correct.
  Finally, the query complexity is $\tO(n^2/\delta)$ as we call \Cref{thm:bounded-dinitz} $\tO(n/\delta)$ times.
\end{proof}

\section{Preliminaries}
\label{sec:preliminaries}
Throughout, let $G=(V,E)$ be a simple undirected graph on $n$ vertices with minimum degree $\delta$. 
Given two (not necessarily disjoint) sets $A,B\subseteq V$, let $E(A,B)$ be the set of edges with one endpoint in $A$ and one endpoint in $B$.
We also use the following primitives.
\begin{claim}
\label{claim:count-crossing-edges}
  Given a simple graph $G$ on $n$ vertices and two disjoint sets $A,B\subseteq V$, one can compute $|E(A,B)|$ using $3$ cut queries.
\end{claim}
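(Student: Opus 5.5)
We plan to express $|E(A,B)|$ through cut values of $A$, $B$, and $A\cup B$, using only the fact that $A$ and $B$ are disjoint. The three queries we make are $\mintcut_G(A)$, $\mintcut_G(B)$, and $\mintcut_G(A\cup B)$.

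To relate these, we partition the edges leaving $A$ according to where their other endpoint lies. Let $C\eqdef V\setminus(A\cup B)$. Since $A$ and $B$ are disjoint, $V\setminus A = B\cup C$ with $B\cap C=\emptyset$, and similarly $V\setminus B = A\cup C$ and $V\setminus(A\cup B)=C$. This gives
\[
  \mintcut_G(A)=|E(A,B)|+|E(A,C)|,\qquad
  \mintcut_G(B)=|E(A,B)|+|E(B,C)|,
\]
\[
  \mintcut_G(A\cup B)=|E(A,C)|+|E(B,C)|.
\]

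Adding the first two identities and subtracting the third cancels the terms involving $C$. We obtain
\[
  |E(A,B)|=\tfrac12\bigl(\mintcut_G(A)+\mintcut_G(B)-\mintcut_G(A\cup B)\bigr),
\]
which is computed from exactly the $3$ queries above. The only point needing care is checking that each edge is counted with the correct multiplicity in these identities, and disjointness of $A$ and $B$ ensures this. No step poses a real obstacle.
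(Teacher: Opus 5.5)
Your proof is correct and takes the same approach as the paper, which rests on exactly the identity $\mintcut_G(A\cup B)=\mintcut_G(A)+\mintcut_G(B)-2|E(A,B)|$ for disjoint $A,B$. You additionally derive this identity via the partition $V=A\cup B\cup C$, a step the paper leaves implicit.
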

\begin{proof}
  The proof follows from the following identity, which holds for any disjoint sets $A,B\subseteq V$, 
  \begin{equation*}
    \mintcut_G(A\cup B) = \mintcut_G(A) + \mintcut_G(B) - 2|E(A,B)|.
    \qedhere
  \end{equation*}
\end{proof}
\begin{claim}
  \label{claim:find-outgoing-edge}
  Given a simple graph $G$ on $n$ vertices, and disjoint vertex sets $S,T\subseteq V$, one can find an edge in $E(S,T)$ or determine that no such edge exists using $O(\log n)$ cut queries.
\end{claim}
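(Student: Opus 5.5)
We will use binary search driven by \Cref{claim:count-crossing-edges}. First, we compute $|E(S,T)|$ with $3$ cut queries. If it is $0$, we report that no edge exists. Otherwise we keep two sets $S'\subseteq S$ and $T'\subseteq T$, starting from $S'=S$ and $T'=T$, and maintain the invariant $|E(S',T')|\ge 1$.

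While $|S'|>1$, we split $S'$ into two halves $S_1,S_2$ whose sizes differ by at most one. Since $S_1$ and $S_2$ are disjoint, $|E(S',T')|=|E(S_1,T')|+|E(S_2,T')|$. We compute $|E(S_1,T')|$ with $3$ queries; this is valid because $S_1$ and $T'$ are disjoint, as $S$ and $T$ are. If the value is positive, we set $S'\leftarrow S_1$. Otherwise the additivity identity forces $|E(S_2,T')|\ge 1$, so we set $S'\leftarrow S_2$ without spending further queries. The invariant is preserved, and after $\lceil\log_2 n\rceil$ rounds $S'=\{u\}$ is a single vertex with at least one neighbor in $T'$.

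We then run the same halving on $T'$, keeping $S'=\{u\}$ fixed, and end with $T'=\{v\}$ satisfying $|E(\{u\},\{v\})|\ge1$. Hence $(u,v)\in E(S,T)$.

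Each round costs $3$ queries and there are at most $2\lceil\log_2 n\rceil$ rounds, plus the initial check, for $O(\log n)$ queries in total. The only point that needs care is that every call to \Cref{claim:count-crossing-edges} is made on disjoint sets. This holds because the sets we use are always subsets of the disjoint sets $S$ and $T$. There is no substantial obstacle.
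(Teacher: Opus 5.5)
Your proof is correct and follows essentially the same binary-search argument as the paper. Both use Claim~\ref{claim:count-crossing-edges} to shrink one side to a single vertex and then the other; the only differences are cosmetic: you halve $S$ before $T$, and you replace the paper's ``query both halves, stop if both are zero'' step with an upfront check of $|E(S,T)|$ plus additivity, which saves one query per round.
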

\begin{proof}
  The proof follows from a simple binary search procedure. 
  Let $T_0=T$, and for $i=0,1,\ldots,\log |T|$, split $T_i$ into two equal parts $T_i^1,T_i^2$, and find $|E(S,T_i^1)|$ using Claim~\ref{claim:count-crossing-edges}.
  If $|E(S,T_i^1)|>0$, set $T_{i+1}=T_i^1$ as there exists an edge in $E(S,T_i^1)$.
  Otherwise, query $|E(S,T_i^2)|$, and again if $|E(S,T_i^2)|>0$, set $T_{i+1}=T_i^2$.
  Finally, if both cuts have value $0$, terminate the procedure as there is no edge in $E(S,T_i)$.
  After $\log |T|$ iterations, $T_{\log |T|}=\{t\}$ is a singleton and the set $E(S,\set{t})$ is non-empty.
  We then repeat the above procedure with $S$ and $t$ to find a vertex $s\in S$ such that $(s,t)\in E(S,\set{t})$.
  It is easy to see that this procedure uses $O(\log |T| + \log |S|) = O(\log n)$ cut queries, and finds an edge in $E(S,T)$ if one exists.
\end{proof}

\section{A Bounded Max-Flow Algorithm}
\label{sec:bounded-dinitz}
In this section we prove Theorem~\ref{thm:bounded-dinitz}.
Throughout, we use the notion of a residual flow graph.
In a simple graph $G$, a flow $f$ between two vertices $s,t\in V$ is a set of edge-disjoint paths from $s$ to $t$, where each path carries one unit of flow.
The residual graph $G_f$ is obtained by first turning the undirected graph $G$ into a directed graph by replacing every undirected edge $\{u,v\}$ with two directed arcs $(u,v)$ and $(v,u)$.
When augmenting $f$ with a path $\pi\in G_f$, which consists of directed arcs, we perform the following process. 
Let $\pi=(s=p_1,\ldots,p_k=t)$, then for every arc $\{p_j,p_{j+1}\}$, remove the directed arc $(p_j,p_{j+1})$ from $G_f$, and add a copy of the arc $(p_{j+1},p_j)$ to $G_f$.
Note that this can create parallel arcs in $G_f$.
The flow $f$ is composed of all the edges corresponding to the arcs carrying flow.
Finally, let $q(f)$ be the support size of $f$, and denote by $E_f$ the set of reverse arcs in $G_f$ corresponding to edges carrying flow in $f$.
Throughout, we will use the following easy observation.
\begin{observation}
  \label{observation:query-residual-graph}
  Let $G=(V,E)$ be a graph and $F\subseteq E$ be a set of explicitly known edges.
  Then, one can query any cut in $G\setminus F$ using a single cut query in $G$.
\end{observation}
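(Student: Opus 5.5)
The statement is Observation~\ref{observation:query-residual-graph}: for a set $F\subseteq E$ of explicitly known edges and any $S\subseteq V$, we must compute $\mintcut_{G\setminus F}(S)$ with one cut query in $G$. The plan is to make that single query, $\mintcut_G(S)$, and then subtract the known edges of $F$ that cross $S$, which needs no further queries.

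First, since $G\setminus F$ has the same vertex set as $G$ and the edge set $E\setminus F$, the edges crossing $(S,V\setminus S)$ in $G\setminus F$ are exactly those in $E(S,V\setminus S)\setminus F$. Hence
\[
  \mintcut_{G\setminus F}(S) \;=\; |E(S,V\setminus S)| - |F\cap E(S,V\setminus S)| \;=\; \mintcut_G(S) - |F\cap E(S,V\setminus S)| .
\]
This identity uses $F\subseteq E$, so that removing $F$ only deletes edges and never adds any.

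Second, we account for the query cost. The term $\mintcut_G(S)$ is obtained by one cut query to the oracle. The term $|F\cap E(S,V\setminus S)|$ depends only on $F$, which is explicitly known, and on $S$, which the algorithm chose. We compute it offline by scanning each edge $\{u,v\}\in F$ and checking whether exactly one endpoint lies in $S$, so it costs no queries. Subtracting the two gives the answer with exactly one query.

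There is no real obstacle here; the argument is essentially a one-line accounting identity. The only point needing care is that the graph must be simple, or more generally that each edge of $F$ is removed exactly once. Under that convention the subtraction counts each removed crossing edge once, and the identity above is exact.
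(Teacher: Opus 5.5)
Your proposal is correct and matches the paper's proof: both make the single query $\mintcut_G(S)$ and subtract the number of explicitly known edges of $F$ that cross $S$, which is computed offline at no query cost. Your remark about counting each removed edge exactly once is extra precision that the paper leaves implicit.
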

\begin{proof}
  Fix some query $S\subseteq V$.
  Notice that $\mintcut_{G\setminus F}(S) = \mintcut_G(S) - \sum_{e\in F} \indic{e\in S\times (V\setminus S)}$, where $\indic{e\in S\times (V\setminus S)}$ is the indicator function for whether $e$ crosses the cut $S$.
  Since $F$ is explicitly known, this can be computed by a single cut query.
\end{proof}

We use the following result from \cite{KL98}, which shows that the union of a maximal spanning forest of $G\setminus f$ and $E_f$ contains an augmenting path $\pi$ from $s$ to $t$ if and only if one exists in $G_f$.
The proof of this theorem is straightforward and is provided in \Cref{sec:kl98-proof} for completeness.
\begin{theorem}[Theorem 3.4 of \cite{KL98}]
  \label{thm:kl98-exist-augmenting}
  Let $T$ be a maximal spanning forest of $G\setminus f$, where $f$ is some $s,t$ flow in $G$.
  Then, there exists an $s,t$ augmenting path in $T\cup E_f$ if and only if there is an augmenting path in $G_f$.
\end{theorem}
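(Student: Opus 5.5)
The statement to prove is Theorem~\ref{thm:kl98-exist-augmenting}: $T\cup E_f$ contains an $s,t$ augmenting path if and only if $G_f$ does. The forward direction is immediate once we check that $T\cup E_f$ is a subgraph of $G_f$. Every edge of $T$ is an edge of $G\setminus f$, i.e.\ it carries no flow, so both of its orientations are present in $G_f$. Every arc of $E_f$ is, by definition, a reverse arc of $G_f$. Hence any directed $s,t$ path in $T\cup E_f$ is also an augmenting path in $G_f$. Here the edges of $T$ are used in either direction, and the arcs of $E_f$ only in their residual direction.

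For the converse, let $\pi$ be an augmenting path in $G_f$. Each arc of $\pi$ is one of two kinds.
\begin{itemize}
\item[(a)] A reverse arc of a flow edge. Such an arc lies in $E_f$ and is kept as is.
\item[(b)] An arc $(u,v)$ with $\{u,v\}$ an edge of $G\setminus f$.
\end{itemize}
One subtlety is an edge carrying flow whose forward arc is absent but whose reverse arc is present; that arc is still of type (a). Since $G$ is simple and the flow consists of edge-disjoint unit paths, each edge is either unused, with both arcs present, or used in exactly one direction, with only its reverse arc present. So this classification is exhaustive.

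For each arc of type (b), $u$ and $v$ are adjacent in $G\setminus f$, so they lie in the same connected component of $G\setminus f$. Because $T$ is a maximal spanning forest of $G\setminus f$, its components coincide with those of $G\setminus f$. Therefore $T$ contains a $u$--$v$ path, and since edges of $T$ can be traversed in both directions in $G_f$, we replace the arc $(u,v)$ by this tree path. Concatenating all the pieces gives an $s,t$ walk in $T\cup E_f$, and shortcutting repeated vertices turns it into a simple directed path.

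The only step needing care is that this shortcutting keeps the result a valid augmenting path. It does: removing a closed subwalk only deletes arcs and never reverses one. Every remaining arc is either a tree edge usable in both directions or an $E_f$ arc in its residual orientation, so the resulting path is an augmenting path in $T\cup E_f$.
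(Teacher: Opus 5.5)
Your proof is correct and rests on the same key fact as the paper's: by maximality, the components of $T$ coincide with those of $G\setminus f$, so any arc of $G_f$ coming from an edge of $G\setminus f$ can be simulated by a tree path, while flow arcs already lie in $E_f$. The paper instead argues the converse by contradiction, taking the set $S$ reachable from $s$ in $T\cup E_f$ and showing that an arc of $G_f$ leaving $S$ forces an arc of $T\cup E_f$ leaving $S$, whereas you substitute tree paths directly into the augmenting path, which is an equivalent, constructive version of the same argument.
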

While \Cref{thm:kl98-exist-augmenting} guarantees the existence of an augmenting path, this path does not necessarily have length $O(n/\delta)$.
To find a short augmenting path, we search for shortcutting vertices, which allow us to shorten $\pi$ to the desired length.
Specifically, we show through a simple pigeonhole argument that there exists some vertex $v\in V$ that is a neighbor in $G\setminus f$ of two far apart vertices in $\pi$, and hence we can shortcut $\pi$ through $v$.
The following lemma, which is the main technical component in the proof of \Cref{thm:bounded-dinitz}, shows how to implement this process.
\begin{lemma}
  \label{lem:shortcut-augmenting-path}
  There exists an algorithm that, given a simple augmenting path $\pi$ in $G_f$ where $f$ is a flow supported on $q(f)$ edges, the flow $f$, and the list of vertex degrees, finds a simple augmenting path $\pi'$ of length at most $8\cdot(n+q(f))/\delta$ in $G_f$ using $\tO(n/\delta)$ cut queries.
\end{lemma}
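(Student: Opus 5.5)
Given the simple augmenting path $\pi=(s=p_1,\dots,p_k=t)$ in $G_f$, I would shorten it by repeatedly finding a vertex $v$ that is adjacent in $G\setminus f$ (undirected, so arcs in both directions are present in $G_f$) to two vertices $p_i,p_j$ with $j-i\ge 3$, and replacing the segment $p_i,\dots,p_j$ by $p_i,v,p_j$. The result is again a walk in $G_f$ from $s$ to $t$, with $v$ possibly already on the path; if so I cut out the loop, which only shortens it and keeps it simple. The combinatorial fact I need is this: \emph{if $k>8(n+q(f))/\delta$, then among the vertices $p_1,p_4,p_7,\dots$ some two have a common neighbor in $G\setminus f$, or one is adjacent to another.}

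To prove the fact, I take the vertices of $\pi$ at indices $\equiv 1 \pmod 4$. There are more than $2(n+q(f))/\delta$ of them. Each has degree at least $\delta$ in $G$, and the total number of flow edges incident to these vertices is at most $2q(f)$. Summing, their closed neighborhoods in $G\setminus f$ have total size more than $\frac{2(n+q(f))}{\delta}\cdot\delta - 2q(f) \ge 2n$. Since there are only $n$ vertices, by pigeonhole two of these closed neighborhoods intersect. The sampled vertices are at distance at least $4$ along $\pi$, so the intersection gives a shortcut of length at most $2$ across a gap of at least $4$ positions. Concretely, either some vertex $v$ is adjacent via $G\setminus f$ edges to $p_i$ and $p_j$, or $p_i$ and $p_j$ are directly adjacent (the case $v=p_j$). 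Both give a valid shortening. A slightly more careful count (double-counting with multiplicity) should guarantee that the improvement is large, not just positive. I would choose the sampling so that each shortcut removes a constant fraction of the excess length.

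For the algorithm, only $O(\log n)$-type iterations of shortcutting should be performed, each using $\tO(|P|)$ queries, where $P$ is the current set of sampled vertices. The naive repetition could take many rounds of length $-2$ gains, so instead I would find the shortcut that jumps farthest. With the path truncated to its first $L=8(n+q(f))/\delta$ or so sampled positions, I would binary search, using cut queries in $G\setminus f$ (Observation~\ref{observation:query-residual-graph}), for the largest $j$ such that $N(p_1)$ meets $N(p_j)$. Greedy farthest-jump from $s$ gives a path of length $O(\#\text{jumps})$.

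The step I expect to be the main obstacle is detecting neighborhood intersections with cut queries. The count $\sum_{u\in A}\deg_{G\setminus f}(u)$ versus $|N(A)|$ is not directly a cut query. My first attempt would be: $|E(A, V\setminus A)|$ in $G\setminus f$ together with the known degrees gives the number of edges inside $A$. Detecting a common neighbor, however, needs a vertex count, not an edge count. So I would instead use a BFS-layer argument: grow from $s$ along $\pi$, and at each step compute the set $N[\{p_{\text{far}}\}]$ explicitly. Exploring a single vertex's neighborhood costs $\tO(\deg)$ queries via Claim~\ref{claim:find-outgoing-edge}, which is too expensive. So I would rather find, via Claim~\ref{claim:find-outgoing-edge}, an edge from the current frontier set $\{p_1,\dots,p_i\}\cup N$ to the suffix of $\pi$ at the farthest index, using a binary search over suffixes. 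Each such step costs $O(\log^2 n)$ queries. I would also argue that the pigeonhole bound forces only $O(n/\delta)$ steps.
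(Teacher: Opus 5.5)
Your pigeonhole fact is correct, but it only certifies a shortcut that saves $O(1)$ edges. The proposal never supplies what the lemma hinges on: a way to find a \emph{long} shortcut with $\polylog n$ cut queries.

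The upgrade you suggest, that each shortcut removes a constant fraction of the excess, is false. Take a chain of cliques $K_\delta$, consecutive ones joined by perfect matchings, set $f=\emptyset$, and let $\pi$ sweep the cliques one after another. Then $r\approx n$, yet every vertex has all its $\pi$-neighbors within fewer than $3\delta$ consecutive positions. So no single shortcut saves more than $O(\delta)$, while the excess is $\Theta(n)$. Any repeated-shortcut scheme therefore needs $\Omega(n/\delta)$ rounds, and the $\tO(n/\delta)$ budget leaves only $\polylog n$ queries per round.

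Your one-pass greedy farthest jump has sound combinatorics: landing points two jumps apart have disjoint closed neighborhoods in $G\setminus f$. But each jump asks for the largest $j$ such that $N[x]$ meets $N[\{p_j,\dots,p_r\}]$. That is exactly the non-additive test you flag as the obstacle, and neither workaround handles it. One-hop searches from the explicit prefix via Claim~\ref{claim:find-outgoing-edge} find only chords of $\pi$, which need not exist, since all useful shortcuts may pass through off-path vertices. Making off-path neighborhoods explicit costs $\tO(\deg)$ queries per landing point, which sums to order $n$ per call rather than $\tO(n/\delta)$. That loses a factor $\delta$, because Theorem~\ref{thm:bounded-dinitz} invokes the lemma up to $\delta$ times.

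The missing idea is to apply the pigeonhole to the \emph{middle} vertex rather than to pairs of path vertices. Let $d_\pi(v)$ be the number of $G\setminus f$-neighbors of $v$ on $\pi$. While $r-1>8(n+q(f))/\delta$, we have $\sum_{v\in V}d_\pi(v)=\sum_{p\in\pi}\deg_{G\setminus f}(p)\ge r\delta-2q(f)\ge r\delta/2$. So some $u$ has $d_\pi(u)\ge r\delta/(2n)$. Its first and last neighbors on $\pi$ are then at least $r\delta/(2n)-1$ positions apart, and shortcutting through $u$ multiplies the length by at most $1-\delta/(8n)$. Hence $O((n/\delta)\log n)$ rounds suffice.

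Crucially, $d_\pi$ is additive over $v$. For any $V_i\subseteq V$, the sum $\sum_{v\in V_i}d_\pi(v)=|E(V_i\setminus\pi,\pi)|+|E(V_i\cap\pi,\pi\setminus V_i)|+2|E(V_i\cap\pi,V_i\cap\pi)|$ is an edge count. It is computable with $O(1)$ cut queries in $G\setminus f$: the last term is the known degree sum minus one cut, which is precisely the tool from your ``first attempt''. A binary search over $V$ that always recurses into the half with the larger average therefore finds such a $u$ with $O(\log n)$ queries. Two binary searches along $\pi$ then locate its extreme neighbors. The same additive trick, applied to $|N[v]\cap P|$, would equally let you locate a vertex lying in two closed neighborhoods of your sampled set $P$.
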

\begin{proof}
  Let $\pi=\set{p_1,\ldots,p_r}$, where the vertices $p_1,\ldots,p_r$ are distinct since $\pi$ is simple, and notice that the length of $\pi$ is $r-1$.
  If $r-1 \le 8\cdot(n+q(f))/\delta$, then we are done.
  We say that $u\in V$ is a \emph{shortcutting vertex} for $\pi$, if there exist two neighbors $p_i,p_j\in \pi$ of $u$ in $G\setminus f$ such that $|i-j|\ge r\delta/(8n)+2$.
  Since  $r-1> 8\cdot(n+q(f))/\delta$, shortcutting $\pi$ through $(p_i,u),(u,p_j)$, the length of $\pi$ becomes at most $(r-1)-(r\delta/(8n)+2)+2\le r(1-\delta/(8n))$, where the $2$ is from adding the edges $(p_i,u),(u,p_j)$ to $\pi$.
  These edges are in $G\setminus f$, and hence their corresponding arcs are in $G_f$, so the shortcut path is an augmenting path in $G_f$.
  Note that if the shortcutting creates a cycle (i.e., $u\in\pi$), we can remove it to obtain a simple path, which only decreases the length.
  Repeating this process $O((n/\delta)\log n)$ times yields an augmenting path of length at most $8\cdot(n+q(f))/\delta$.
  For the rest of the proof we focus on showing that a shortcutting vertex always exists, and that it can be found using $\polylog(n)$ cut queries.

  For every $v\in V$, let $N_{\pi}(v)$ be the set of neighbors of $v$ in $\pi$ in $G\setminus f$, and let $d_\pi(v)=|N_{\pi}(v)|$.
  Notice that if $d_\pi(v) \ge r\delta/(8n)+3$, there exist two neighbors $p_i,p_j\in N_{\pi}(v)$ such that $|i-j|\ge r\delta/(8n)+2$, and $v$ is a shortcutting vertex.
  Notice that $\sum_{v\in V} d_\pi(v) \ge \sum_{p\in \pi} \deg_G(p)-2q(f)\ge r\delta-2q(f)$, where $\deg_G(p)$ is the degree of $p$ in $G$ and the $2$ factor is since each edge in $f$ has two endpoints.
  Therefore, on average, $\tfrac{1}{n}\sum_{v\in V} d_\pi(v) \ge (r\delta-2q(f))/n\ge r\delta/(2n)$ where the last inequality follows from $r\delta \ge 8\cdot(n+q(f)) >8q(f)$.
  Hence, by the pigeonhole principle, there exists some shortcutting $u\in V$ such that $d_\pi(u) \ge r\delta/(2n)>r\delta/(8n)+3$, where the last inequality is since $r\delta > 8\cdot(n+q(f))\ge 8n$.

  To find $u$, apply a binary search approach.
  Partition $V$ into two sets $V_1,V_2$ of roughly equal size, and find $\tfrac{1}{|V_1|}\sum_{v\in V_1} d_\pi(v),\tfrac{1}{|V_2|}\sum_{v\in V_2} d_\pi(v)$.
  Then, recurse on the set with the larger average.
  By the additivity of $d_\pi(v)$, if $\tfrac{1}{|V_i|}\sum_{v\in V_i} d_\pi(v)\ge \tfrac{1}{|V|}\sum_{v\in V} d_\pi(v)$, then there exists $v\in V_i$ with $d_\pi(v) \ge r\delta/(8n)+3$.
  Perform this $O(\log n)$ times until the process culminates in a singleton set, which is a shortcutting vertex $u$.
  
  We now show how to find  $\sum_{v\in V_i} d_\pi(v)$ efficiently using cut queries.
  Denote $V_i = W_i \cup A_i$, where $W_i = V_i\setminus \pi$ and $A_i = V_i\cap \pi$.
  Then, we can write $\sum_{v\in V_i} d_\pi(v) = |E(W_i,\pi)| + |E(A_i,\pi\setminus A_i)| + 2|E(A_i,A_i)|$.
  The first two elements can be computed using $O(1)$ cut queries in $G\setminus f$ by \Cref{claim:count-crossing-edges}.
  In addition, the third element is given by $2|E(A_i,A_i)|=\sum_{p\in A_i} \deg_{G\setminus f}(p) - \mintcut_{G\setminus f}(A_i)$.
  The list of vertex degrees in $G$ is known by the theorem statement, and hence we can compute vertex degrees in $G\setminus f$ by subtracting the number of edges in $f$ that are incident to each vertex.
  Finally, cut queries to $G\setminus f$ can be simulated using a single cut query in $G$ by \Cref{observation:query-residual-graph}.

  Therefore, one can find a shortcutting vertex $u$ using $O(\log n)$ cut queries.
  Then, perform two binary searches to find the first and last neighbor of $u$ in $\pi\setminus\set{u}$ in $G\setminus f$, each using $O(\log r)$ cut queries by \Cref{claim:count-crossing-edges}.
  Hence, the total number of cut queries to find a shortcut is $O(\log n + \log r) = O(\log n)$.
  Since the algorithm finds a shortcut $O(n/\delta \log n)$ times, the total number of cut queries used is $\tO(n/\delta)$.
\end{proof}
\subsection{Proof of Theorem~\ref{thm:bounded-dinitz}}
In this section, we prove Theorem~\ref{thm:bounded-dinitz}.
To do so, we need the following helper lemma for maintaining a maximal spanning forest of $G\setminus f$ while augmenting the flow.
\begin{lemma}
  \label{lem:maintain-maximal-spanning-forest}
  There exists an algorithm that, given a maximal spanning forest $T$ of $G\setminus f$, the flow $f$, and an augmenting path $\pi$ of $G_f$, finds a maximal spanning forest of $G\setminus f'$ using $\tO(|\pi|)$ cut queries,  where $f'$ is the updated flow after augmenting along $\pi$.
\end{lemma}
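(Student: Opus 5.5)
Augmenting along $\pi$ changes $G\setminus f$ in a limited way. Every edge of $\pi$ that was not carrying flow is traversed forward and now carries flow, so it leaves $G\setminus f'$. Every edge traversed backwards (an arc of $E_f$) stops carrying flow, so it enters $G\setminus f'$. Thus $G\setminus f'$ equals $G\setminus f$ with at most $|\pi|$ edges deleted (the set $D$) and at most $|\pi|$ edges inserted (the set $I$). Both sets are explicitly known from $\pi$ and $f$. The plan is to repair $T$ under these $O(|\pi|)$ updates, spending $O(\log n)$ cut queries per update.

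The insertions are handled first, and they cost no queries. Set $T_1 = T\setminus D$. This is a forest in $G\setminus f'$, since it lost at most $|D|$ edges and so splits into at most $|D|$ additional components. Then, for each $e\in I$, add $e$ to $T_1$ if it joins two distinct components. This is pure bookkeeping because $I$ is explicit. The resulting forest $T_2\subseteq G\setminus f'$ has at most $|D|$ more components than a maximal spanning forest of $G\setminus f'$. Indeed, $T$ was maximal for $G\setminus f$, so every edge of $G\setminus f$ lies within a component of $T$. The components of $T$ are refined by at most $|D|$ cuts, and edges of $G\setminus f'$ either lie in $G\setminus f$ or belong to $I$.

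Next, the forest is made maximal by merging along crossing edges. Keep the current forest and its component partition. For a component $C$, use \Cref{claim:find-outgoing-edge} on the graph $G\setminus f'$ with sets $C$ and $V\setminus C$. This graph is queried through \Cref{observation:query-residual-graph}, because $f'$ is explicitly known. The search either finds an edge leaving $C$, which we add, merging two components, or certifies that $C$ is closed. The step I expect to be the main obstacle is bounding how many components must be tested, since there can be up to $n$ components and we cannot afford to test them all. The key point is that only components touched by deletions can have outgoing edges. Every component of $T$ was closed in $G\setminus f$, and any edge of $G\setminus f'$ between different components of $T_2$ lies in $G\setminus f$, because edges of $I$ were already absorbed. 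Hence such an edge lies inside one component of $T$ and crosses only the pieces into which that component was split by $D$. So I would maintain a worklist containing only the components that contain an endpoint of an edge in $D$, at most $2|D|$ of them. For each one, I would repeatedly query for an outgoing edge, merge when one is found, and keep the merged component on the list, until the query reports none. Each successful query reduces the component count, which happens at most $|D|$ times. Each failed query retires a list entry, which happens at most $O(|D|)$ times. Components off the list stay closed, because they coincide with untouched components of $T$, or with unions of pieces that have already been checked.

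In total, the algorithm makes $O(|\pi|)$ invocations of \Cref{claim:find-outgoing-edge} at $O(\log n)$ cut queries each, giving $\tO(|\pi|)$ queries. At termination every component is closed in $G\setminus f'$, so the forest is a maximal spanning forest of $G\setminus f'$.
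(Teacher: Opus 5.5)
Your proof is correct and follows essentially the same route as the paper: remove the edges of $\pi$ from $T$, repair only the $O(|\pi|)$ components created by this deletion via \Cref{claim:find-outgoing-edge} with queries simulated through \Cref{observation:query-residual-graph}, and reinsert the edges of $f\setminus f'$ at no query cost. The only difference is the order of the last two steps. The paper first makes the forest maximal in $G\setminus(f\cup\pi)$ and then adds the freed edges, while you add them first and query $G\setminus f'$ directly, which is why you need the (correct) extra observation that every remaining crossing edge lies in $G\setminus f$ and hence inside a single component of $T$.
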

\begin{proof}
  Let $T' = T\setminus \pi$ be the forest after removing the edges of $\pi=\set{p_1,\ldots,p_r}$ from $T$, and note that $|\pi|=r$.
  Furthermore, let $T_1,\ldots,T_k$ be the connected components of $T$, and let $T_1',\ldots,T_{k'}'$ be the connected components of $T'$.
  We first construct a maximal spanning forest of $G\setminus (f\cup \pi)$ (where we replace arcs in $\pi$ with their corresponding edges in $G$), and then add the edges of $f\setminus f'$ back to the forest to obtain a maximal spanning forest of $G\setminus f'$.

  To construct a maximal spanning forest of $G\setminus (f\cup \pi)$, iterate over all new components $\set{T_i'}\setminus \set{T_i}$.
  For each component, perform $O(\log n)$ cut queries in $G\setminus (f\cup \pi)$ to either find an edge between $T_i'$ and $V\setminus T_i'$ or certify one does not exist using \Cref{claim:find-outgoing-edge} and \Cref{observation:query-residual-graph}.
  Then, merge the two components and iterate until no more edges can be found.
  Notice that this process can only merge new components, as edges connecting old components do not exist in $G\setminus (f\cup \pi)\subseteq G\setminus f$.
  Finally, proceed to the next component.
  Notice that the resulting forest is maximal in $G\setminus (f\cup \pi)$ since there exist no edges in $G\setminus (f\cup \pi)$ that connect any two components of the forest.
  To bound the query complexity of the algorithm, note that each step takes $O(\log n)$ cut queries in $G\setminus (f\cup \pi)$, each of which can be simulated by a single cut query in $G$ by \Cref{observation:query-residual-graph}.
  Furthermore, each step either reduces the number of new components by $1$ or finishes processing the component.
  Since there are at most $2(r-1)$ new components formed in $T'$, the total number of queries is $\tO(r)$.

  Finally, iterate over all edges of $f\setminus f'$ and add each edge that connects two different components to the forest.
  Since the edges of $f,f'$ are explicitly known, this can be done without any cut queries.
  Notice that this forms a maximal spanning forest of $G\setminus f'$, since there exist no edges in $G\setminus f'$ that connect any two components of the forest.
\end{proof}
\begin{proof}[Proof of \Cref{thm:bounded-dinitz}]
  Begin by initializing an empty flow $f_0$, and constructing a maximal spanning forest $T$ of $G$.
  In addition, query the degrees of all vertices in $G$.
  While the flow value is at most $\delta$, if there exists an augmenting path $\pi$ in $T\cup E_{f_i}$, which we may assume is simple by removing cycles, find a short augmenting path $\pi'$ of length at most $8\cdot(n+q(f_i))/\delta$ in $G_{f_i}$ using \Cref{lem:shortcut-augmenting-path}.
  Notice that finding an augmenting path in $T\cup E_{f_i}$ does not use any cut queries, as both sets are explicitly known.
  Then, augment the flow $f_i$ along $\pi'$, to obtain $G_{f_{i+1}}$, and update $T$ to be the maximal spanning forest  of $G\setminus f_{i+1}$ using \Cref{lem:maintain-maximal-spanning-forest}.
  If no augmenting path exists, then $f_{i}$ is a maximum flow by \Cref{thm:kl98-exist-augmenting}, and the algorithm returns it.
  Finally, if the flow value is $\delta$ and there exists an augmenting path, return that the flow value is larger than $\delta$.
  
  To output a cut, apply a growing set procedure.
  Let $S=\{s\}$, and while there exists an outgoing edge from $S$ in $G_{f_i}$, add its new endpoint to $S$.
  When no such edge exists, return $S$.
  The correctness of this procedure follows from the min-cut max-flow theorem, as no augmenting $s,t$ paths remain in $G_{f_i}$.
  Note that outgoing edges from $S$ in $G\setminus f_i$ can be found using \Cref{claim:find-outgoing-edge} and \Cref{observation:query-residual-graph}, and outgoing edges that are in $f_i$ can be found trivially as it is explicitly known.
  Hence, the total number of cut queries used to find the cut is $\tO(n)$.

  To analyze the query complexity, we first show that $q(f_i)\le O(n)$ throughout the execution.
  Notice that by \Cref{lem:shortcut-augmenting-path}, we have $q(f_i) \le q(f_{i-1}) +8\cdot(n+q(f_{i-1}))/\delta$.
  Moving sides we obtain 
  \begin{equation*}
    q(f_i)+n 
    \le (1+8/\delta)(q(f_{i-1})+n)
    ,
  \end{equation*}
  which yields $q(f_i)+n \le (1+8/\delta)^i (q(f_0)+n) = O(n)$ as $i\le \delta$ and $q(f_0)=0$.
  Therefore, the total query complexity of updating the spanning forest is $\tO(n)$, and the total query complexity of finding the short augmenting paths is $\tO(n/\delta\cdot \delta)=\tO(n)$.
  In addition, the query complexity of querying all the vertex degrees is $O(n)$ and constructing the initial maximal spanning forest is $\tO(n)$ by calling \Cref{lem:ni-certificate} with $\tau=1$.
  Hence, the total query complexity of the algorithm is $\tO(n)$.
\end{proof}

\section*{Acknowledgements}
The author had the insight that the number of edges in a maximum flow should be small when the minimum degree is large, which he later realized was already known in the literature \cite{GY95,HKR97}.
The author then further observed that this should imply a version of Dinitz's algorithm that uses few phases.
This intuition was made precise and proved by ChatGPT 5.5 Pro, which gave an earlier version of \Cref{thm:bounded-dinitz}.
The author then asked ChatGPT 5.6 Ultra to improve the result, which yielded the final version of the theorem alongside its entire proof.
All proofs were manually verified and rewritten by the author, who takes full responsibility for the content of the paper.
A first draft of this paper was written by Claude Opus 4.8 and later heavily edited by the author using in-line writing suggestions from GitHub Copilot.
Finally, the author used Claude Opus 4.8 for feedback on several sections, particularly on the clarity and correctness of the English prose and the mathematical content. 

{\small
  \bibliographystyle{alphaurl}
  \bibliography{bibliography}
} %

\appendix

\section{Nagamochi-Ibaraki Sparsifier}
\label{sec:ni-sparsifier-proof}
In this section we prove \Cref{lem:ni-certificate}, this is a standard result, but we include the proof for completeness.
\begin{proof}
    The NI sparsifier is composed of a set of $\tau$ edge-disjoint spanning forests $F_1,\ldots,F_\tau$, each of the graph $G\setminus\cup_{j< i}F_j$.
    Our construction algorithm is based on iteratively peeling off spanning forests from the graph, and is based on the original construction of Nagamochi and Ibaraki~\cite{NI92}.
    We first describe how to find a single spanning forest $F_i$ of the graph $G\setminus\cup_{j< i}F_j$ using cut queries, which we then repeat $\tau$ times to find the forests $F_1,\ldots,F_\tau$.

    To find a spanning forest $F_i$, iterate over all the vertices $v\in V$.
    For each $v$, denote by $U$ the set of vertices not in the same connected component as $v$ in the current forest $F_i$.
    Then, let $U'\subseteq U$ be the set of vertices which are not neighbors of $v$ in some previously constructed forest $F_1,\ldots,F_{i-1}$.
    Find an edge from $v$ to $U'$ using $O(\log n)$ cut queries, by binary search over the set $U'$ using \Cref{claim:find-outgoing-edge}.
    Add this edge to the forest $F_i$, and repeat until no more edges are found.
    Observe that this uses $O(\log n)$ cut queries per search by \Cref{claim:find-outgoing-edge}.

    Therefore, the total query complexity is $\tO(n\tau)$, since we find at most $n-1$ edges in each of the $\tau$ forests, and for each forest we iterate once over all vertices.
    Finally, we show that this sparsifier satisfies the cut property.
    Fix some cut $S\subseteq V$, if $|E(S,V\setminus S) \cap \bigcup_{i=1}^\tau F_i|\ge \tau$, then we are done.
    Otherwise, there exists some forest $F_i$ in which $S,V\setminus S$ are disconnected.
    By the maximality of $F_i$, there exists no edge in $E(S,V\setminus S)$ that is not included in $\bigcup_{j=1}^{i-1} F_j$, and hence all edges of $E(S,V\setminus S)$ are included in the sparsifier.
    This concludes the proof.
\end{proof}

\section{Existence of Augmenting Paths in a Maximal Spanning Forest}
\label{sec:kl98-proof}
In this section we prove \Cref{thm:kl98-exist-augmenting}.
\begin{proof}[Proof of \Cref{thm:kl98-exist-augmenting}]
  Since $T\cup E_f$ is a subgraph of $G_f$, any augmenting path in $T\cup E_f$ is also an augmenting path in $G_f$.
  For the reverse direction we show that if there is no augmenting path in $T\cup E_f$, then there is no augmenting path in $G_f$.
  Assume towards contradiction that there exists an augmenting path in $G_f$, but no augmenting path in $T\cup E_f$.
  Let $S$ be the set of vertices reachable from $s$ in $T\cup E_f$.
  Notice that since $t\not\in S$, there exists an augmenting path in $G_f$ and an arc $e$ in that path that leaves $S$ in $G_f$.

  If $e\in E_f$, then it exists in $T\cup E_f$, contradicting the assumption that there is no augmenting path in $T\cup E_f$.
  Otherwise, $e$ is an arc corresponding to an edge of $G\setminus f$, and since $T$ is a maximal spanning forest of $G\setminus f$, there exists a path of undirected edges in $T$ connecting the endpoints of $e$.
  However, such a path must exit $S$ as $e$ leaves $S$, and hence there is an arc leaving $S$ in $T\cup E_f$, which leads to a contradiction.
  Therefore, there is no augmenting path in $G_f$.
\end{proof}

\end{document}